\newtheorem{theorem}{Theorem}
\newtheorem{definition}{Definition}
\newcommand{\spa}[1]{\mathcal{#1}}
\newcommand{\Tr}{\mathrm{Tr}}
\newcommand{\half}{\frac{1}{2}}
\newcommand{\calA}{\mathcal{A}}
\newcommand{\calB}{\mathcal{B}}
\newcommand{\tr}{\mathrm{Tr}}
\newcommand{\ket}[1]{| #1 \rangle}
\newcommand{\bra}[1]{\langle #1 |}
\newcommand{\ketbra}[2]{\ket{#1} \bra{#2}}
\newcommand{\kb}[1]{\ketbra{#1}{#1}}
\newcommand{\braket}[2]{\langle #1 | #2 \rangle}
\newcommand{\dsum}{\displaystyle\sum}
\newcommand{\inner}[2]{\langle #1, #2 \rangle}
\newcommand{\set}[1]{\left\{ #1 \right\}}
\newcommand{\norm}[1]{\left\| #1 \right\|}
\newcommand{\zo}{\{0,1\}}
\newcommand{\OT}{\mathrm{OT}}
\newcommand{\BC}{\mathrm{BC}}
\newcommand{\CF}{\mathrm{CF}}
\newcommand{\CHSH}{\mathrm{CHSH}}
\newcommand{\comment}[1]{}
\newcommand{\COMMENT}[1]{}
\begin{document}

\date{\today}
\author{Jamie Sikora} 
\affiliation{Laboratoire d'Informatique Algorithmique: Fondements et Applications, CNRS - Universit\'e Paris Diderot, France.}
\author{Andr\'e Chailloux} \affiliation{
INRIA Paris-Roquencourt, SECRET Project-Team, France.}
\author{Iordanis Kerenidis$^{1,}$} 
\affiliation{
Centre for Quantum Technologies, National University of Singapore, Singapore.}

\title{Strong connections between quantum encodings, non-locality and quantum cryptography}

\begin{abstract}
Encoding information in quantum systems can offer surprising advantages but at the same time there are limitations that arise from the fact that measuring an observable may disturb the state of the quantum system. In our work, we provide an in-depth analysis of a simple question: What happens when we perform two measurements sequentially on the same quantum system? This question touches upon some fundamental properties of quantum mechanics, namely the uncertainty principle and the complementarity of quantum measurements. Our results have interesting consequences, for example they can provide a simple proof of the optimal quantum strategy in the famous Clauser-Horne-Shimony-Holt game. 
Moreover, we show that the way information is encoded in quantum systems can provide a different perspective in understanding other fundamental aspects of quantum information, like non-locality and quantum cryptography. We prove some strong equivalences between these notions and provide a number of applications in all areas.
\end{abstract}

\maketitle

Quantum information studies how information is encoded in quantum systems and how it can be observed through measurements. On one hand, the exponential number of amplitudes that describe the state of a quantum system can be used in order to encode a vast amount of classical information into the state of a quantum system. Hence, we can use quantum information to resolve many distributed tasks much more efficiently than with classical information \cite{Raz99,BCWdW01,GavinskyKKRW08}. On the other hand, quantum information does not always offer advantages, since every time an observer measures a quantum system its state may collapse and information may become irretrievable. For example, Holevo's theorem \cite{Hol73}, asserts that one quantum bit can be used to transmit only one bit of classical information and no more. 

The intricate interplay between encoding information in quantum systems and measurement interference is at the heart of some fundamental results in quantum information, from Bell inequalities~\cite{Bell64} to quantum key distribution \cite{BB84}.
Our goal is to deepen our understanding of the connections between quantum encodings, non-locality, and quantum cryptography and provide new insight on the power and limitations of quantum information, by looking at it through these various lenses. 
 
This paper links three seemingly unrelated
concepts in quantum information (encodings, non-local games, and cryptographic primitives) via properties of sequential non-commuting measurements. The technical part of this paper examines quantum encodings and bounds the success of sequentially measuring an encoding of two bits (or strings) to learn their XOR. We then show how these bounds can be used to study not only encodings, but non-local games and cryptographic tasks as well. The conceptual part of this paper discusses how the applications we consider are all equivalent in some sense. When viewing each as extracting information from a quantum encoding, we are able to preserve the three notions: (1) hiding the XOR in the encoding, (2) providing perfect security in the cryptographic task, and (3) satisfying the non-signaling principle in the non-local game. 

In addition to providing philosophical insights towards each of these quantum tasks, we combine the technical and conceptual tools in this paper to give applications in all areas. 

\section{Quantum encodings and complementarity of measurements}
One of the fundamental postulates of quantum mechanics is Heisenberg's uncertainty principle which shows that it is impossible to perfectly ascertain the momentum and position of a particle. More precisely, entropic uncertainty relations provide explicit bounds on the entropy of the outcome distributions of the different measurements. For example, if we consider two measurements in the computational and Hadamard bases, then no matter the state of the quantum system, there is always some entropy in at least one of the outcome distributions, hence the measurement outcomes cannot be perfectly predicted simultaneously. 

Another important notion, which is more closely related to quantum encodings, is the complementarity of quantum measurements. Complementarity analyzes what happens to the outcome distributions of measurements when performed sequentially on the same system. We say that two measurements are perfectly complementary, if after having performed the first measurement, no more information can be extracted by performing the second measurement on the post-measured state. This is, for example, the case with a Hadamard and a computational basis measurement, or any measurement after a complete projective measurement. On the other hand, they are non-complementary if after measuring with one, the outcome distribution of the second is unaffected. 

We make the connection of complementarity and quantum encodings clearer by considering the following  scenario: Let us consider two different observables that take binary values $x_0 \in \{0,1\}$ and $x_1 \in \{0,1\}$ according to some known distribution. Assume that given one copy of a quantum system (of any dimension) in state $\rho_{x_0,x_1}$, i.e., a quantum encoding of the bits $x_0,x_1$, there exists a quantum measurement, i.e., a decoding procedure, that correctly measures $x_0$ with probability $p_0$ and a different measurement that correctly measures $x_1$ with probability $p_1$. We would like to analyze these probabilities and more specifically the {\em average decoding probability} $(p_0+p_1)/2$.  

Uncertainty relations show that when the measurements are ``incompatible'' the average decoding probability cannot be too large. For example, for the computational and Hadamard bases one can show this probability is always at most $\cos^2(\pi/8)$. 
There are many cases where we do not know the different measurement operators, only the probabilities they succeed. For example, one may not know the measurements used in an implicit strategy in a cryptographic protocol or quantum non-local game where the only defining property of the strategy is the success probability. Could we still provide some interesting bound on the average decoding probability that would hold independent of the measurement operators, possibly by relating it to some other property of the quantum encoding? 

We provide such bounds by relating the average decoding probability to the decoding probability of some other function $f(x_0,x_1)$ of the bits. 
Classically, it is straightforward to relate the probability of decoding $f(x_0,x_1)$ to the probabilities of decoding each bit $x_i$; in the quantum world, this task is delicate. 
Suppose we want to compute the XOR of the two bits (i.e., compute whether the two bits have the same value or not), and for this we perform the measurement for each bit $x_i$ in sequence. Once the first bit is decoded, the post-measured state is an eigenstate of the first operator, hence the probability of then correctly decoding the second bit may have changed. 

Much of the previous literature about measuring the post-measured state concerns ideas surrounding Heisenburg's uncertainty principle (see, for example,~\cite{O03} and the references therein). In a setting more related to this paper, post-measurement information has been used for state discrimination~\cite{BWW08, GW10}. This is useful for cryptography in the bounded-storage model~\cite{DFSS08} and the noisy-storage model~\cite{WST08, S10}. 

\section{Learning relations} 

Our first contribution is an analysis of the process of sequentially performing two measurements on the same quantum state: Let $\ket{\psi}$ be a pure state and $\{ C, 1-C \}$, $\{ D, 1-D \}$ be two projective measurements such that $\cos^2(\alpha) := \norm{C \ket{\psi}}^2_2 \geq \half$ and ${\cos^2(\beta) := \norm{D \ket{\psi}}^2_2 \geq \half}$, where $C$ and $D$ correspond to correctly measuring. Through geometric arguments, we bound the probability that both measurements succeed (give the correct guess) or both fail (give the incorrect guess) as:
\vspace{-0.1cm}
\begin{eqnarray}
 \cos^2(\alpha - \beta) \geq \norm{C D \ket{\psi} }_2^2 + \norm{(1-C)(1-D) \ket{\psi} }_2^2 \nonumber\\
\geq \cos^2(\alpha + \beta). \label{claim:technical} 
\end{eqnarray} 

\vspace{-0.5cm}

\begin{figure}[htbp] 
   \centering
   \includegraphics[width=3.25in]{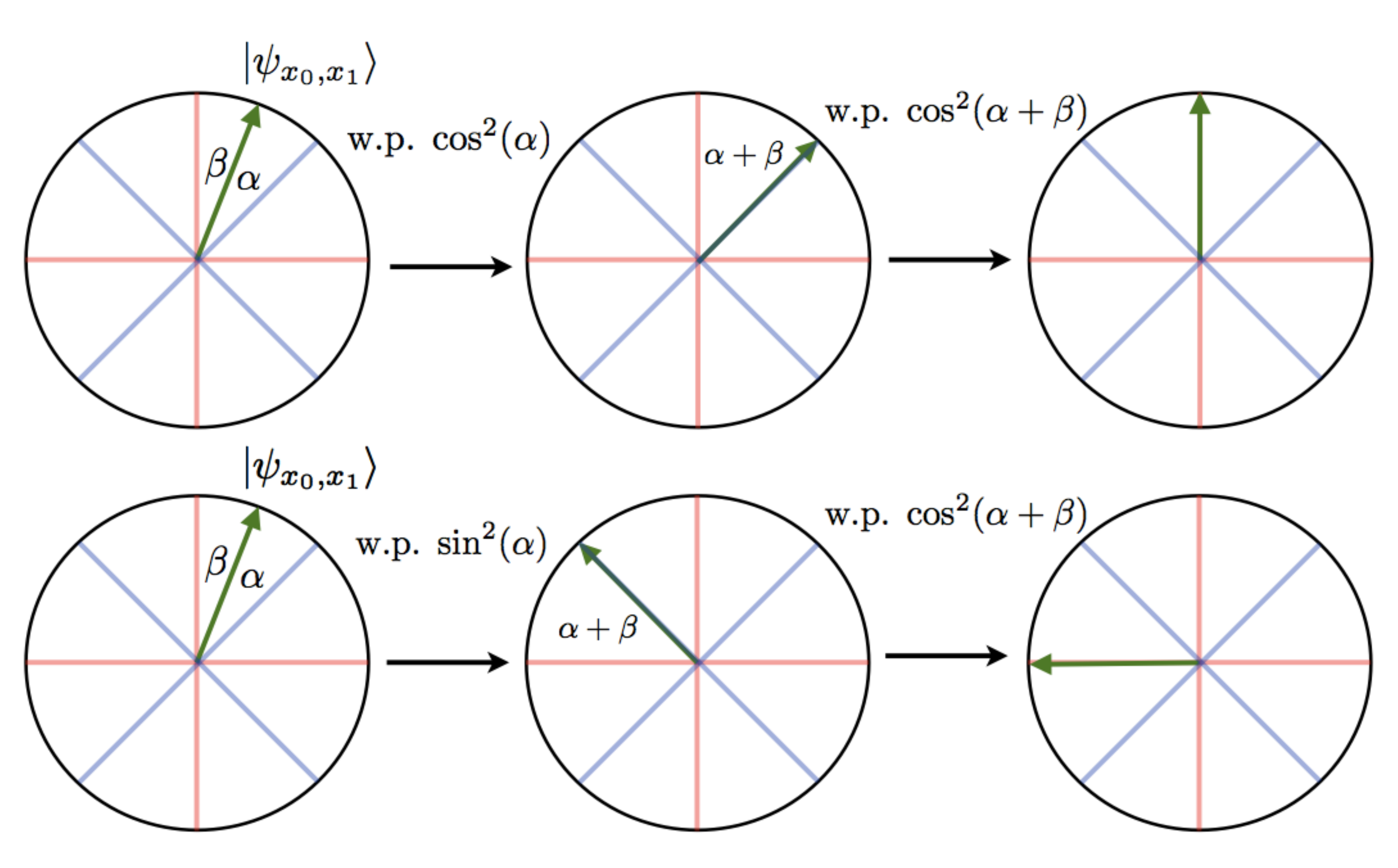} 
   \caption{(Color online) Simple scenario for the lower bound in Equation~(\ref{claim:technical}): Two-outcome projective measurements on a pure state in two dimensions. By successive measurements, one can learn the XOR by learning both bits correctly or by learning both bits incorrectly. This occurs with probability $\cos^2(\alpha) \cos^2(\alpha+ \beta) + \sin^2(\alpha) \cos^2(\alpha+\beta) = \cos^2(\alpha + \beta)$.}
   \label{fig:example}
\end{figure}

In the language of quantum encodings, we can use~(\ref{claim:technical})  
to provide the following learning relation for bits, and extend it to strings. (The proof of Equation~(\ref{claim:technical})  and Theorem~\ref{LearningLemmata}, below, can be found in the appendix.) 

\begin{theorem} \label{LearningLemmata}
For any quantum encoding of bits $x_0$ and $x_1$, $\Pr[\textup{learning } x_0 \oplus x_1] \geq (2c-1)^2$, 
where we define ${c := \frac{1}{2} \Pr[\textup{learning } x_0] + \frac{1}{2} \Pr[\textup{learning } x_1]}$. For $x_0,x_1 \in \{0,1\}^n$, if $c \geq 1/2$, then we have 
${\Pr[\textup{learning } x_0 \oplus x_1] \geq \Pr[\textup{learning } (x_0, x_1)] \geq c(2c-1)^2}$.
\end{theorem}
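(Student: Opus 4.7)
I would purify each state $\rho_{x_0, x_1}$ to $\ket{\psi_{x_0, x_1}}$ and, by Naimark's dilation, take the two decoders to be projective, with projectors $C^{y_0}, D^{y_1}$ for $y_0, y_1 \in \{0,1\}$. The plan is to analyze the sequential strategy that measures $M_1$ first and then $M_0$ on the post-measurement state, outputting the XOR of the two outcomes: conditional on $(x_0, x_1)$ this succeeds with probability $\|C^{x_0}D^{x_1}\ket{\psi}\|^2 + \|(I-C^{x_0})(I-D^{x_1})\ket{\psi}\|^2$. I would then apply Equation~(\ref{claim:technical}) -- or its variant obtained by flipping $(C^{x_0}, D^{x_1})$ to $(I-C^{x_0}, I-D^{x_1})$, which preserves the sum and can be used when $a := \|C^{x_0}\ket{\psi}\|^2$ or $b := \|D^{x_1}\ket{\psi}\|^2$ falls below $1/2$ -- to bound this below by $\cos^2(\alpha+\beta)$, where $\cos^2\alpha = a$ and $\cos^2\beta = b$. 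Combining the trigonometric identity $\cos(\alpha+\beta)\cos(\alpha-\beta) = \cos^2\alpha - \sin^2\beta = a + b - 1$ with $|\cos(\alpha-\beta)| \leq 1$ yields $\cos^2(\alpha+\beta) \geq (a+b-1)^2$ per input. Averaging over $(x_0, x_1)$ and applying Jensen's inequality to the convex map $t \mapsto t^2$ then produces $\Pr[\textup{learning } x_0 \oplus x_1] \geq (\mathbb{E}[a+b-1])^2 = (2c-1)^2$.

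\textbf{String case.} The first inequality $\Pr[\textup{learning } x_0 \oplus x_1] \geq \Pr[\textup{learning } (x_0, x_1)]$ is immediate since learning the pair determines the XOR. For the harder bound $\Pr[\textup{learning } (x_0, x_1)] \geq c(2c-1)^2$, I would compare the two sequential orderings and keep the better one; the averaged per-input pair-success probability is $\tfrac{1}{2}(\|C^{x_0}D^{x_1}\ket{\psi}\|^2 + \|D^{x_1}C^{x_0}\ket{\psi}\|^2)$. Next I would invoke Jordan's lemma to decompose the Hilbert space into at-most-two-dimensional common invariant subspaces for $C := C^{x_0}$ and $D := D^{x_1}$, writing $\ket{\psi} = \sum_i \sqrt{p_i}\ket{\psi_i}$. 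A direct rank-one computation in each two-dimensional block gives $\|CD\ket{\psi_i}\|^2 + \|DC\ket{\psi_i}\|^2 = (a_i + b_i)\,|\braket{c_i}{d_i}|^2$ (with $a_i := \|C\ket{\psi_i}\|^2$, $b_i := \|D\ket{\psi_i}\|^2$, and $\ket{c_i}, \ket{d_i}$ the images of $C, D$ in the block), and the bit-case trig identity applied in the block gives $|\braket{c_i}{d_i}|^2 \geq (a_i + b_i - 1)^2$. Taking joint expectation over $(x_0, x_1)$ and the block index $i$ (so that $\mathbb{E}[a_i + b_i] = p_0 + p_1 = 2c$), the pair-success probability is at least $\tfrac{1}{2}\mathbb{E}[(a_i + b_i)(a_i + b_i - 1)^2]$.

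The final step is the one-variable inequality $\mathbb{E}[s(s-1)^2] \geq \mathbb{E}[s](\mathbb{E}[s]-1)^2$ for a $[0, 2]$-valued random variable $s$ with $\mu := \mathbb{E}[s] \geq 1$. I would prove it by writing $s = \mu + Z$ with $\mathbb{E}[Z] = 0$ and expanding: the remainder past the target is $(3\mu - 2)\,\mathrm{Var}(s) + \mathbb{E}[Z^3]$, and the constraints $s \in [0, 2]$ together with $\mu \geq 1$ force $|Z| \leq \mu$, so $\mathbb{E}[Z^3] \geq -\mu\,\mathrm{Var}(s)$ and the remainder is at least $(2\mu - 2)\,\mathrm{Var}(s) \geq 0$. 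This is precisely where the hypothesis $c \geq 1/2$ enters, and it yields the desired bound $\Pr[\textup{learning } (x_0, x_1)] \geq c(2c-1)^2$. I expect the hard part to be the Jordan-lemma reduction to two-dimensional blocks: it is this step that isolates the extra factor $(a_i + b_i)$ per block and -- through the averaging -- produces the additional factor $c$ that distinguishes the string bound from the bit bound $(2c-1)^2$.
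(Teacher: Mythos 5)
You have a genuine gap in the bit case. Equation~(\ref{claim:technical}) is stated and proved under the hypotheses $\norm{C\ket{\psi}}_2^2 \geq \half$ \emph{and} $\norm{D\ket{\psi}}_2^2 \geq \half$: the last step of its proof, $\arccos\sqrt{x_0} = A(\ket{X},\ket{Y}) \leq \alpha+\beta$ hence $x_0 \geq \cos^2(\alpha+\beta)$, is only a valid deduction when $\alpha+\beta \leq \pi/2$ (if $\alpha + \beta > \pi/2$ one can even have $\braket{X}{Y} = 0$ while $\cos^2(\alpha+\beta) > 0$). Your per-input application must therefore cover inputs on which the globally optimal decoders do badly, and your double flip $(C,D) \mapsto (I-C, I-D)$, which sends $(a,b) \mapsto (1-a,1-b)$, rescues only the regime where \emph{both} $a,b \leq \half$. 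In the mixed regime $a > \half > b$ neither the original nor the flipped pair satisfies the hypotheses, so the bound $\norm{CD\ket{\psi}}_2^2 + \norm{(I-C)(I-D)\ket{\psi}}_2^2 \geq (a+b-1)^2$ is simply not delivered by the tools you cite. The inequality you need is in fact true unconditionally, but the clean way to get it is your own string-case machinery: in each two-dimensional Jordan block the XOR quantity equals $|\braket{c_i}{d_i}|^2$ exactly (since $|\braket{c_i^\perp}{d_i^\perp}| = |\braket{c_i}{d_i}|$ in dimension two), phase minimization gives $|\braket{c_i}{d_i}|^2 \geq \cos^2(\alpha_i+\beta_i) \geq (a_i+b_i-1)^2$ with no half-assumption, and averaging with Cauchy--Schwarz/Jensen finishes. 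Note also that the paper avoids this entire per-input case analysis by a different device: it purifies to a single joint state $\ket{\Omega}$ including Alice's measurement $M$, and defines the global projectors $C = \sum_{x_0,x_1} M_{x_0,x_1} \otimes P_{x_0}$ and $D = \sum_{x_0,x_1} M_{x_0,x_1} \otimes Q_{x_1}$, so that $\norm{C\ket{\Omega}}_2^2 = p_0$ and $\norm{D\ket{\Omega}}_2^2 = p_1$ are the \emph{ensemble averages}, which for bits are automatically $\geq \half$ (guessing achieves it); a single application of Equation~(\ref{claim:technical}) plus $\cos^2\alpha + \cos^2\beta - 1 = \cos(\alpha+\beta)\cos(\alpha-\beta) \leq \cos(\alpha+\beta)$ then yields $(2c-1)^2$ with no Jensen step and no mixed case at all.

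Your string case, by contrast, is correct (modulo routine bookkeeping for one-dimensional Jordan blocks and blocks on which $C$ or $D$ restricts to rank $0$ or $2$) and takes a genuinely different route from the paper. The paper outsources the core estimate to a generalization of the analysis in~\cite{CKS13}, obtaining $\Pr[\textup{learning }(x_0,x_1)] \geq \left(\frac{\cos^2\alpha+\cos^2\beta}{2}\right)\cos^2(\alpha+\beta)$ for the same global angles and then reuses the trigonometric step; you instead give a self-contained argument: the two sequential orderings average to $\frac{1}{2}(a_i+b_i)|\braket{c_i}{d_i}|^2$ per block, and the scalar inequality $\mathbb{E}[s(s-1)^2] \geq \mu(\mu-1)^2$ for $s \in [0,2]$, $\mu = \mathbb{E}[s] \geq 1$ converts the block average into $c(2c-1)^2$. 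I checked your variance proof of that inequality and it is sound: the exact cubic expansion gives remainder $(3\mu-2)\mathrm{Var}(s) + \mathbb{E}[Z^3]$, and $|Z| \leq \mu$ (which uses $\mu \geq 1$, i.e., $c \geq \half$, exactly where the paper also needs it) gives $\mathbb{E}[Z^3] \geq -\mu\,\mathrm{Var}(s)$, so the remainder is at least $(2\mu-2)\mathrm{Var}(s) \geq 0$. Your route buys a self-contained proof that makes visible where the extra factor $c$ comes from (the factor $(a_i+b_i)$ per block); the paper's global-projector trick buys brevity and, in the bit case, immunity to precisely the mixed-regime problem that trips up your first paragraph.
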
 
 
\vspace{-0.05cm}
 
The probability of learning a bit (or a bit string) is the maximum over all quantum measurements of correctly measuring the bit (or bit string). Theorem 1 shows that, independent of the measurements, the average probability of correctly measuring two observables cannot be very large unless at the same time the probability of correctly measuring both or none of the observables is large as well. A similar result has been obtained for a restricted class of encodings, those based on \emph{hyperbits}~\cite{PW12}.

We can now define a measure of complementarity $\Gamma$, as the difference between the probability of decoding the XOR of the two bits and the probability had the measurements been non-complementary. By Equation~(\ref{claim:technical}),  
\vspace{-0.0cm}
\begin{eqnarray} 
|\Gamma |
& = &
\left| \norm{C D \ket{\psi} }_2^2 + \norm{(1-C)(1-D) \ket{\psi} }_2^2 \right. \nonumber \\ 
& - & 
\left. \norm{C \ket{\psi} }_2^2 \norm{D \ket{\psi} }_2^2 - \norm{(1-C)  \ket{\psi} }_2^2 \norm{(1-D) \ket{\psi} }_2^2 \right| \nonumber \\\vspace{-0.1cm}
& \leq & \half \sin(2\beta)\sin(2\alpha). 
\end{eqnarray}
Note $\Gamma$ is zero for non-complementary measurements and our bound can be saturated, e.g., when $C = D$ we have $\Gamma = \half \sin(2\beta)\sin(2\alpha)$, and for $C = 1-D$ we have ${\Gamma = - \half \sin(2\beta)\sin(2\alpha)}$. 

 
\section{The Clauser-Horne-Shimony-Holt game as a quantum encoding} 
Non-locality is a fundamental property of quantum information. Here, two space-like separated parties, Alice and Bob, initially share some resource and do not communicate further. We study the joint probability distributions of measurement outcomes that can arise when Alice and Bob perform measurements locally. Bell inequalities provide bounds on the possible distributions when the initial resource is classical and we are interested in the maximum violation when Alice and Bob share quantum entanglement. 

One can describe Bell inequalities as games between Alice and Bob. For example, in the Clauser-Horne-Shimony-Holt $(\CHSH)$ game~\cite{CHSH69}, Alice receives a random $x \in \zo$ and outputs $a \in \zo$ and Bob receives a random $y \in \zo$ and outputs $b \in \zo$. The quantum value of the game, $\omega^*(\CHSH)$, is the maximum probability that $a \oplus b = yx$ over all initial states and all measurement operators. There is a quantum strategy to win this game with probability $\cos^2(\pi/8)$; moreover, Tsirelson's bound shows this value is optimal \cite{Tsi87}. 

Recently, non-locality has been studied from the point of view of information. The goal is to understand quantum mechanics through information principles: for example, why is there a quantum strategy for the CHSH game with probability exactly $\cos^2(\pi/8)$ and not more? \emph{Information causality}, one such postulate about information transmission, asserts that any theory that abides to it must comply with Tsirelson's bound for the CHSH game \cite{PPKSWZ09}. 
To make the connection between non-locality and quantum information more clear, let us see how we can recast the CHSH game as a quantum encoding: Once Alice receives $x$ and measures $a$, Bob's post-measurement state can be seen as an encoding of $a$ and $x$. When $y=0$, Bob needs to output $a$ and when $y=1$, he needs to output $a \oplus x$. Hence, we can write the value as 
$ \omega^*(\CHSH) = \dfrac{1}{2}(\Pr[\text{Bob learns } a] + \Pr[\text{Bob learns } a \oplus x]). $\\
Note that the non-signaling condition of CHSH implies the probability of Bob guessing Alice's input $x$ is $1/2$ or equivalently the probability of learning the XOR of $a$ and $a \oplus x$ is $1/2$ (in this case, we say that the encoding ``hides'' the XOR). With this perspective, Theorem~\ref{LearningLemmata} provides an alternative proof of Tsirelson's bound, since solving the inequality $(2\omega^*(\CHSH)-1)^2 \leq 1/2$ gives  $\omega^*(\CHSH) \leq \cos^2(\pi/8)$.
 
\section{Learning relations and oblivious transfer} 
Another area where quantum information has had great impact is cryptography. The properties of quantum information, for example, the uncertainty principle, enable secure key distribution protocols \cite{BB84}, however, when the two parties do not trust each other, there are only partial advantages. For example, quantum protocols for coin flipping or bit commitment can only restrict cheating to a probability of $1/\sqrt{2}$ or $0.739$, respectively \cite{Kit03, CK09, CK11}. 
We wish to relate the ability to perform cryptographic primitives to non-locality and quantum encodings. 
 
We look at \emph{oblivious transfer} (OT), defined below. 

\begin{definition}[Imperfect oblivious transfer]
A quantum oblivious transfer protocol with correctness $p$, denoted here as $\OT_p$, is an interactive protocol with no inputs, between Alice and Bob such that: \\ 

\begin{compactitem}
\item Alice outputs two independent, uniformly random bits $(z_0,z_1)$ or Abort and Bob outputs uniformly random bit $b$ and another bit $w$ or Abort. 
\item If Alice and Bob are honest, $w = z_b$ with {probability $p$}. 
\item Alice and Bob can abort only if cheating is detected. 
\item If $p = 1$ we say the protocol is \emph{perfect}. \\ 
\end{compactitem}

Ideally at the end of the protocol, Bob should only learn the value of $z_b$ and Alice should remain oblivious to which bit Bob learned~\textup{\cite{W70, R81}}. 
\end{definition} 

We also examine quantum oblivious \emph{string} transfer protocols with correctness $p$, denoted here as $\OT_p^n$ which is defined analogously to an imperfect oblivious transfer protocol except $z_0$ and $z_1$ are $n$-bit strings. 

Oblivious transfer is the most important task in providing security between distrustful parties, since any complex operation can be rendered secure using secure oblivious transfer~\cite{Kil88}. Using Theorem~\ref{LearningLemmata}, we prove a series of new results for oblivious transfer~\footnote{We use a non-composable definition of security which makes our impossibility results even stronger.}. First, we extend the oblivious transfer bounds in \cite{CKS13} to oblivious {string} transfer, and show that in any protocol, either Alice can learn Bob's index or Bob can learn both of Alice's strings with probability at least $58.52 \%$ (proof in the appendix). Second, we consider the case when cheating Bob wants to learn the XOR of Alice's bits. Note that most definitions enforce that Bob gets no information about Alice's other bit (instead of the XOR of her bits). Classically, the two definitions are equivalent~\cite{DFSS06}. Quantumly, we use the XOR definition that relates directly to the CHSH game (discussed in the next section).   

\begin{theorem} \label{thm:LBCurve} 
For any $\OT_p$ protocol, we have \\
$p \leq \Pr[\textup{Alice learns b}] \left( \sqrt{\Pr[\textup{Bob learns } z_0 \oplus z_1]} + 1 \right)$. 
\end{theorem}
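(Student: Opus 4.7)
The plan is to construct a cheating Bob whose post-protocol state encodes both $z_0$ and $z_1$ well enough for Theorem~\ref{LearningLemmata} to produce a lower bound on his probability of learning the XOR. Throughout I will abbreviate $X := \Pr[\textup{Bob learns } z_0\oplus z_1]$ and $A := \Pr[\textup{Alice learns } b]$. First I would purify the honest protocol so that at the end of an honest run with Bob's input $b\bit$, the joint Alice--Bob register is in a pure state $\ket{\psi^b}_{AB}$ and honest Bob's decoding is a projective measurement $\set{M_b^w}_{w\bit}$ satisfying $\sum_{z_0,z_1} \bra{\psi^b} \Pi_A^{z_0,z_1} \otimes M_b^{z_b} \ket{\psi^b} = p$, where $\Pi_A^{z_0,z_1}$ is Alice's projector onto revealing $(z_0,z_1)$. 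By Holevo--Helstrom, $A = \half + \frac{1}{4}\norm{\rho_A^0 - \rho_A^1}_1$ with $\rho_A^b = \tr_B \kb{\psi^b}$, so the two branches are close on Alice's side exactly when $A$ is near $\half$.

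Cheating Bob would then play the protocol honestly with $b = 0$ to obtain $\ket{\psi^0}$ and perform two sequential projective measurements on his subsystem: first the honest measurement $M_0$, which yields $z_0$ with probability exactly $p$; and then a rotated measurement $V^{\dagger} M_1 V$, where $V$ is the Uhlmann unitary on Bob's side attaining $|\bra{\psi^1}(I_A \otimes V)\ket{\psi^0}| = F(\rho_A^0,\rho_A^1)$. Because $(I_A \otimes V)\ket{\psi^0}$ has high fidelity with $\ket{\psi^1}$ when $A$ is close to $\half$, the rotated measurement recovers $z_1$ from $\ket{\psi^0}$ with some probability $p'$ that is controlled by this fidelity. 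The state $\ket{\psi^0}$, together with these two decoders, now forms a quantum encoding of $(z_0,z_1)$ in the sense of Theorem~\ref{LearningLemmata}, so setting $c = \half(p + p')$ the learning relation yields $X \geq (2c-1)^2 = (p + p' - 1)^2$ whenever $p + p' \geq 1$.

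The final step is to convert this into the claimed multiplicative form. Concretely, one would argue that the correct book-keeping produces the inequality $p_0 + p_1 \geq p/A$, so that Theorem~\ref{LearningLemmata} applied with $2c = p/A$ yields $(p/A - 1)^2 \leq X$, which rearranges to $p \leq A(\sqrt{X} + 1)$.

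The main obstacle will be this last step. A naive Fuchs--van de Graaf estimate only yields $p' \geq p - \sqrt{1 - F^2}$, producing an \emph{additive} bound $2p \leq 1 + \sqrt{X} + \sqrt{1-F^2}$ rather than the \emph{multiplicative} bound involving $A$ that is claimed. Sharpening this --- presumably by exploiting the fact that Alice's honest POVM only classically reads out the register holding $(z_0,z_1)$, so that the Uhlmann swap degrades the decoding by a factor scaling directly with $A - \half$ instead of with $\sqrt{1-F^2}$, or alternatively by replacing the two-step sequential strategy with a coherent superposition of the $b = 0$ and $b = 1$ branches together with a post-selection event of probability roughly $A$ --- is the delicate technical point from which the exact form of Theorem~\ref{thm:LBCurve} ultimately emerges.
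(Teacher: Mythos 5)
You have correctly located the gap yourself, but it is a genuine one: the inequality $p_0 + p_1 \geq p/A$ on which your final step rests is never established, and the tools you invoke cannot establish it. Uhlmann/Fuchs--van de Graaf arguments relate the quality of Bob's second decoding to the fidelity between Alice's reduced states, and, as you concede, this yields only additive losses of the form $p' \geq p - \sqrt{1-F^2}$; no book-keeping about Alice's measurement being a classical readout of $(z_0,z_1)$ converts this into the ratio $p/A$. A multiplicative trade-off between \emph{Alice's} cheating probability and \emph{Bob's} cheating probability is precisely the kind of statement that does not follow from a single-shot fidelity estimate --- it is the content of Kitaev's lower bound for strong coin flipping, $A_{\CF} B_{\CF} \geq \Pr[\textup{honest output } 0]$, whose proof goes through semidefinite-programming duality rather than Uhlmann's theorem alone. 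A further, smaller, issue: your identity $A = \half + \frac{1}{4}\norm{\rho_A^0 - \rho_A^1}_1$ is only a \emph{lower} bound on cheating Alice's probability in an interactive protocol, since she may deviate mid-protocol rather than merely measure the two honest final states; the direction you need survives, but the equality is false in general.

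The paper's proof supplies exactly the missing ingredient by reduction: it builds a coin-flipping protocol on top of the $\OT_p$ protocol (Alice sends a random $d$, Bob reveals $(b,w)$, the outcome is $c = b \oplus d$, Alice aborts on inconsistency), observes $A_{\OT} = A_{\CF}$, and uses Theorem~\ref{LearningLemmata} for the one step your sketch also uses it for --- a Bob who forces outcome $c$ with probability $B_{\CF}$ must be able to reveal either value of $b$ consistently, hence can learn $z_0 \oplus z_1$ with probability $(2B_{\CF}-1)^2 \leq B_{\OT}$, i.e.\ $B_{\CF} \leq \half\left(\sqrt{B_{\OT}}+1\right)$. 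Kitaev's bound with honest output probability $p/2$ then gives $A_{\OT}\left(\sqrt{B_{\OT}}+1\right) \geq p$, which is the theorem; the reduction also automatically handles interactivity and aborts, which your direct construction on $\ket{\psi^0}$ glosses over. Without importing Kitaev-strength machinery (or re-deriving an equivalent product bound), your approach tops out at an additive inequality strictly weaker than the claimed multiplicative one.
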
 

\begin{proof} 
We show how to use oblivious transfer to  construct a \emph{coin flipping} protocol.   
A quantum coin flipping protocol with correctness $p$, denoted $\CF_p$, is an interactive protocol with no inputs, between Alice and Bob such that: \\

\begin{compactitem}
\item The protocol is aborted with probability $1-p$ when Alice and Bob are honest. 
\item If the protocol is not aborted, then they both output a randomly generated bit $c$. \\ 
\end{compactitem}

We say that the coin flipping protocol has cheating probabilities $A_{\CF}$ and $B_{\CF}$ where \\
\begin{compactitem}
\item $A_{\CF} := \max_{c \in \zo} \Pr[\textup{Bob accepts outcome } c]$, 
\item $B_{\CF} := \max_{c \in \zo} \Pr[\textup{Alice accepts outcome } c]$. \\ 
\end{compactitem}

The coin flipping protocol is as follows. \\ 

\begin{compactenum}
\item Alice and Bob perform the $\OT_p$ protocol so they have outputs $(z_0, z_1)$ and $(b, w)$ respectively. 
\item If no one aborted, then Alice sends randomly chosen $d \in_R \set{0,1}$ to Bob. 
\item Bob sends $b$ and $w$ to Alice. 
\item If $z_b$ from Bob is inconsistent with Alice's bits then Alice aborts. Otherwise, they both output $c = b \oplus d$. \\ 
\end{compactenum}

We see that when Alice and Bob are honest, Alice aborts in this protocol with probability $1-p$, since $p$ is the probability that Bob receives the correct bit in the $\OT_p$ protocol. If Alice does not abort, the outcome of the coin flipping protocol is random. 

\paragraph{Cheating Alice:} 
Let $A_{\OT}$ denote the probability Alice can learn $b$ in the $\OT_p$ protocol (without Bob aborting) and let $A_{\CF}$ denote the probability Alice can force honest Bob to accept a desired outcome in the coin flipping protocol. It is straightforward to see that $A_{\OT} = A_{\CF}$. 

\paragraph{Cheating Bob:} 
Let $B_{\OT}$ denote the probability Bob can learn $z_0 \oplus z_1$ in the $\OT_p$ protocol (without Alice aborting) and let $B_{\CF}$ denote the probability Bob can force honest Alice to accept a desired outcome in the coin flipping protocol.
Using our XOR learning relation for bits, and an analysis similar to the one in~\cite{CKS13}, we can show that $\dfrac{\sqrt{B_{\OT}} + 1}{2} \geq B_{\CF}$. 
Kitaev's lower bound for coin flipping~\cite{Kit03} states that 
\[ A_{\CF} B_{\CF} \geq \Pr[\textup{Alice and Bob honestly output } 0] \] 
for any quantum coin flipping protocol.   
In the case of the coin flipping protocol above, we have that Alice and Bob both output either bit with probability $p/2$ (since the protocol is aborted with probability $1-p$). Therefore, we have $A_{\OT} \dfrac{\sqrt{B_{\OT}} + 1}{2} \geq A_{\CF} B_{\CF} \geq \dfrac{p}{2}$  
implying 
$A_{\OT} \left( {\sqrt{B_{\OT}} + 1} \right) \geq p$, 
proving Theorem~2.  
\end{proof} 

Notice that for secure protocols with ${\Pr[\text{Alice learns b}] = \half \text{ and } \Pr[\text{Bob learns } z_0 \oplus z_1] = \half}$, we have $p \leq \cos^2(\pi/8)$, which shows that the secure oblivious transfer protocol in~\cite{BBBW83} is optimal. 
Last, by relating oblivious transfer and bit commitment protocols~\cite{CKS13}, we prove that in any $\OT$ protocol with $p=1$, Alice can learn Bob's index or Bob can learn the XOR of Alice's bits with probability at least $59.9 \%$ (proof in the appendix). 


\section{Equivalences between CHSH-type games, secure oblivious transfer and quantum encodings} 
So far,  we have used Theorem~\ref{LearningLemmata} to provide results about the CHSH game and oblivious transfer. 
We now show that these applications are deeply connected and can be extended to more intricate non-local games and oblivious transfer variants. Such non-local games are important since knowing their Bell inequality violations  brings us that much closer to understanding the true power of quantum entanglement and the hope of characterizing it as a resource via the right information postulate(s). 

We now consider \emph{secure} $\OT_p^n$ protocols where Alice can obtain no information about Bob's index $b$ (without him aborting) and Bob can obtain no information about $z_0 \oplus z_1$ (without Alice aborting). 

We also consider the following generalization of the CHSH game. 

\begin{definition}[CHSH$_{n}$ game] 
The \emph{$\CHSH_n$ game} is a game between Alice and Bob where: \\

\begin{compactitem}
\item Alice and Bob are allowed to create and share an entangled state $\ket{\psi}$ before the game starts. Once the game starts, there is no further communication between Alice and Bob. 
\item Alice receives a random string $x \in \zo^n$ and Bob receives a random bit $y~\in~\zo$. 
\item Alice outputs $a \in \zo^n$ and Bob outputs ${b \in \zo^n}$.  
\item Alice and Bob win if $a_i \oplus b_i = y \, x_i$, for all $i \in \{ 1, \ldots, n \}$. \\ 
\end{compactitem}
  
The value of the game, $\omega^*(\CHSH_n)$, is the maximum probability which Alice and Bob can win. 
\end{definition} 

The $\CHSH$ game is the special case when $n=1$ (we omit the subscript $1$ in this case). 

A relationship between learning probabilities and quantum games is pointed out in~\cite{OW10}, where they show that in any physical theory, the amount of non-locality and uncertainty of the theory are tightly linked. In our equivalences, we strengthen the quantum connection by conserving the notions of security / non-signaling / hidden XOR and we deal with the interactivity of oblivious transfer protocols. 

\begin{theorem} \label{thm:equivalence1}
The following four statements are equivalent for every $n \in \mathbb{N}$: \\ 
\begin{compactenum}
\item There is a quantum encoding of ${x_0, x_1 \in \zo^n}$ that hides the XOR and 
$\frac{1}{2} \sum_{c \in \zo} \Pr[\textup{learn } x_c] = p$.
\item There is a secure, non-interactive $\OT_p^{n}$ protocol.
\item There is a secure $\OT_p^{n}$ protocol.
\item There is a strategy for winning the game $\CHSH_n$ with probability $p$. \\ 
\end{compactenum} 
\end{theorem}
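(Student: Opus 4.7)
The plan is to close the loop $1 \Rightarrow 2 \Rightarrow 3 \Rightarrow 1$ and establish $1 \Leftrightarrow 4$ separately; $2 \Rightarrow 3$ is free since a non-interactive protocol is a one-round special case of an interactive one. For $1 \Rightarrow 2$, honest Alice samples $(z_0,z_1)$ uniformly, prepares $\rho_{z_0,z_1}$ and sends it, and honest Bob picks $b\in\zo$ uniformly and applies the $b$-th decoder: correctness is $p$ by hypothesis, Bob cannot recover $z_0 \oplus z_1$ because the encoding hides the XOR, and Alice is trivially secure since she receives no message. For $3 \Rightarrow 1$, I would purify the given protocol so all randomness comes from terminal measurements, and define $\rho_{z_0,z_1}$ as Bob's reduced state conditioned on Alice measuring $(z_0,z_1)$; Bob's two $b$-branches then furnish the $z_0$- and $z_1$-decoders, correctness yields average decoding probability $p$, and XOR-security is equivalent (by the standard guessing-versus-distinguishability duality) to $\sigma_s := \tfrac{1}{2^n}\sum_{z_0}\rho_{z_0,z_0 \oplus s}$ being independent of $s$, i.e., hides-XOR.

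The delicate direction is $1 \Rightarrow 4$: the encoding must become a $\CHSH_n$ strategy, but Alice and Bob cannot communicate after receiving their inputs. The key idea is a remote-state-preparation trick powered by hides-XOR. Taking purifications $|\phi_{z_0,z_1}\rangle_B$ of the encoded states, define
\begin{equation}
|\Psi_x\rangle_{AB} = \frac{1}{2^{n/2}}\sum_{z_0 \in \zo^n} |z_0\rangle_A \otimes |\phi_{z_0,z_0 \oplus x}\rangle_B,
\end{equation}
whose reduced state on $B$ is $\sigma_x$. Hides-XOR forces $\sigma_x = \sigma_0$, so by Uhlmann's theorem there is a unitary $V_x$ on $A$ with $(V_x \otimes I_B)|\Psi_0\rangle = |\Psi_x\rangle$. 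The strategy is then: Alice and Bob pre-share $|\Psi_0\rangle$; on input $x$, Alice applies $V_x$, measures $A$ in the computational basis to obtain $z_0$, and outputs $a = z_0$; on input $y$, Bob applies the decoder $D_y$ to the post-measurement state $|\phi_{z_0,z_0 \oplus x}\rangle$ and outputs its outcome. A successful decoding yields $b = z_y$, hence $a \oplus b = z_0 \oplus z_y = y \cdot x$, and the winning probability is the average decoding probability $p$. The main technical obstacle is the Uhlmann step, which needs Alice's purifying register to be sufficiently large; I would ensure this by taking the purification of $\sigma_0$ on a large enough ancilla from the outset.

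For $4 \Rightarrow 1$, I would extract an encoding from a $\CHSH_n$ strategy $(|\psi\rangle_{AB},\{A^x_a\},\{B^y_b\})$. Symmetrizing by sharing a uniform $r \in \zo^n$ XORed into both players' outputs preserves the winning condition and makes Alice's marginal on $a$ uniform for every $x$. Defining $\rho_{z_0,z_1}$ as Bob's classical-quantum state---his quantum register together with $r$---conditioned on Alice's outputs being $(z_0,z_1)$, and using $B^0$ and $B^1$ as the decoders, non-signaling of $|\psi\rangle$ combined with the symmetrization yields $\sigma_s$ independent of $s$ (hides-XOR), while the average decoding probability equals the winning probability $p$, closing the equivalence.
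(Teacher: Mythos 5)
Your $1\Rightarrow 4$ direction is essentially the paper's core argument (the paper runs it as $2\Rightarrow 4$, which is the same content), and your symmetrized $4\Rightarrow 1$ with a shared pad $r$ XORed into both outputs is a clean variant of the paper's relabelling $a\to x_0$, $x\oplus a\to x_1$. But your $3\Rightarrow 1$ has a genuine gap at exactly the point where interactivity matters. The claim that ``Bob's two $b$-branches then furnish the $z_0$- and $z_1$-decoders'' does not stand on its own: in an interactive protocol Bob's index is sampled (or held in superposition) \emph{during} the execution, so after purification he lands in branch $b$ with probability $1/2$ each, and running two separate executions with $b$ fixed to $0$ and to $1$ yields two a priori \emph{different} final ensembles --- not a single encoding $\rho_{z_0,z_1}$ equipped with two decoders, which is what statement 1 demands. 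The missing idea is to invoke the \emph{other} security condition here: since $b$ is hidden from Alice, the post-measurement states $\ket{\Omega_0}$ and $\ket{\Omega_1}$ (after Bob measures his index in the purified final state) have identical reduced states on Alice's side, so Uhlmann's theorem gives a unitary acting on Bob's space alone mapping $\ket{\Omega_b}$ to $\ket{\Omega_c}$; Bob measures $b$, rotates his branch to the desired $c$, and only then applies the protocol's decoder, learning $z_c$ with the protocol's probability. You deploy precisely this Uhlmann trick in $1\Rightarrow 4$, but it is equally indispensable in $3\Rightarrow 1$, powered there by security against Alice rather than by hides-XOR.

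Two secondary issues. First, your $1\Rightarrow 2$ assumes the encoding comes with uniform $(z_0,z_1)$ and equal per-string decoding probabilities, but statement 1 only guarantees an \emph{average} over $c\in\zo$ under an unspecified prior $\pi_{x_0,x_1}$, while the $\OT_p^n$ definition requires Alice's outputs to be independent and uniform; the paper repairs this by having Alice output $((1-a)x_0+ax_1+d_1,\,(1-a)x_1+ax_0+d_2)$ for random $a\in\zo$ and pads $d_1,d_2\in\zo^n$ sent to Bob --- the swap bit equalizes Bob's two learning probabilities at $p$ (a fact the $\Rightarrow\CHSH_n$ direction then relies on) and the pads uniformize Alice's outputs while preserving the hidden XOR, since $z_0\oplus z_1 = x_0\oplus x_1\oplus d_1\oplus d_2$. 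Second, in $1\Rightarrow 4$ your purifications $\ket{\phi_{z_0,z_1}}$ are subscripted $B$: if the purifying register sits with Bob, the reduced state of $\ket{\Psi_x}$ on $B$ is a mixture of the pure states $\kb{\phi_{z_0,z_0\oplus x}}$, not $\sigma_x$, and hides-XOR no longer forces it to be $x$-independent, which kills the Uhlmann step; the purifying register must be on Alice's side (as the paper does, and as your own later remark about ``Alice's purifying register'' suggests you intended), after which the argument goes through.
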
 

\vspace{-0.5cm}

\begin{proof} 
We provide four reductions. \\ 

\noindent 
{\em (1.}$\implies${\em 2.)}.  
Let $\{ \rho_{x_0, x_1} : x_0, x_1 \in \zo^n \}$ be a set of quantum states and  $\{ \pi_{x_0, x_1} : x_0, x_1 \in \zo^n \}$ be a probability distribution satisfying the properties of statement 1 of Theorem~3. Alice chooses $x_0, x_1$ with probability $\pi_{x_0, x_1}$ and sends $\rho_{x_0, x_1}$ to Bob. Alice outputs \\
$(z_0, z_1) := ((1-a) x_0 + a x_1 + d_1, (1-a) x_1 + a x_0 + d_2), $
for random choices of $a \in \{0,1\}$ and $d_1,d_2 \in \{0,1\}^n$ that she sends to Bob. 
The first bit randomizes the success probabilities for Bob (so he has an equal probability of learning $z_0$ and $z_1$) and the $d_1, d_2$ bit strings ensure that Alice's outcomes are random. 
Bob picks a random bit $b$ and measures to learn $z_b$ depending on $a, d_1, d_2$. In particular, the probability of learning $z_b$ for $b \in \{ 0,1 \}$ is equal to the average decoding probability of $x_0$ and $x_1$, hence equal to $p$. Note that $z_0 \oplus z_1 = x_0 \oplus x_1 \oplus d_1 \oplus d_2$ is hidden from Bob and Alice cannot learn $b$ (since Bob does not send any message), thus this protocol is secure. 

\noindent \\ 
{\em (2.}$\implies${\em 4.)}. 
Suppose there is a secure, non-interactive $\OT_p^n$ protocol. Without loss of generality~\footnote{For our purposes, we can assume Alice discards her quantum state except for the registers containing $z_0$ and $z_1$.}, Alice and Bob's joint state from the non-interactive $\OT_p^n$ protocol is
$ 1/2^n \sum_{z_0, z_1 \in \zo^n} \kb{z_0, z_1} \otimes \rho_{z_0, z_1}, $
for some $\rho_{z_0, z_1}$ in Bob's space $\spa{B}$. Since Alice has no information about $b$, Bob can use $\rho_{z_0, z_1}$ and measurements $\{ M^0_{z_0} \}_{z_0 \in \zo^n},$ $\{ M^1_{z_1} \}_{z_1 \in \zo^n}$ to learn the value of Alice's first and second string, respectively, with
$\Pr[\textup{Bob learns } z_0] = \Pr[\textup{Bob learns } z_1]~=~p.
$
Consider some purification $\ket{\psi_{z_0,z_1}} \in \calA \otimes \calB$ of $\rho_{z_0,z_1}$ where $\calA$ is controlled by Alice. Let \\
${\ket{\Omega} := \dfrac{1}{2^n} \dsum_{z_0, z_1 \in \zo^n}  \ket{z_0 \oplus z_1}_{\spa{A}_1} \ket{z_0}_{\spa{A}_2} \ket{z_1}_{\spa{A}_3} \ket{\psi_{z_0, z_1}}_{\spa{AB}}}$, 
$\ket{\Omega_x}$ to be the post-measured state assuming Alice measured $\spa{A}_1$ to get $x$, and $\rho_x := \tr_{\spa{A}_2 \spa{A}_3 \spa{A}} \kb{\Omega_x}$ to be Bob's state.  We have  $\rho_x = \rho_{0},\; \forall x \in \zo^n,$ since Bob has no information about $z_0 \oplus z_1$. By Uhlmann's theorem, for all $x \in \zo^n$, there exists unitary $U_x$ on $\spa{A}_2 \otimes \spa{A}_3 \otimes \spa{A}$ with $(U_x \otimes I_{\spa{B}}) \ket{\Omega_0} = \ket{\Omega_x}$.
We define the $\CHSH_n$ strategy: \\ 

\begin{compactenum}
\item Alice and Bob share the state $\ket{\Omega_0}$ and receive random $x \in \zo^n$ and $y \in \zo$, respectively. 
\item Alice applies $(U_x)$ such that Alice and Bob share the state $\ket{\Omega_x}$. She measures the space $\spa{A}_2$ in the computational basis to get her outcome $a$.  
\item Bob applies the measurement $\{ M^y_b \}_{b \in \zo^n}$ on his space $\spa{B}$ to determine his outcome $b$. \\ 
\end{compactenum}
 
Conditioned on Alice receiving $x$ and outputting $a$, Bob has the state $\Tr_{\spa{A}} \ketbra{\psi_{a,x\oplus a}}{\psi_{a,x\oplus a}} = \rho_{a,x\oplus a}$. If Bob gets $y = 0$, he must output $b = a$. If Bob gets $y = 1$, he must output $b = a \oplus x$. The probability they win the $\CHSH_n$ game with this strategy is hence equal to $p$. \\ 

\noindent
{\em (3.}$\implies${\em 1.)}. 
Let $\ket{\Omega}_{\spa{A} \spa{B}}$ be the final joint state of the $\OT_p^n$ protocol for honest Alice and Bob. Suppose Alice measures to learn $(z_0, z_1)$ which are distributed uniformly. Let $\rho_{z_0, z_1}$ be Bob's post-measured state. Then, $\{ \rho_{z_0, z_1} : z_0, z_1 \}$ and $\pi$ being the uniform distribution satisfy the hidden XOR condition, since Alice does not abort (both parties are honest), and the protocol is secure. We now describe a procedure to decode each $z_c$, for $c \in \zo$, with probability $p$.

We may assume Bob measures his part of the state $\ket{\Omega}_{\spa{A} \spa{B}}$ (instead of decoding $\rho_{z_0, z_1}$) since it does not matter if Alice measures before or after Bob. Suppose $\ket{\Omega_b}_{\spa{A} \spa{B}}$ is the post-measured joint state when Bob partially measures $\ket{\Omega}_{\spa{A} \spa{B}}$ to obtain his index $b$. Since Bob will not abort and the protocol is secure, we know $b$ is hidden from Alice. Again, by Uhlmann's theorem, Bob can transform $\ket{\Omega_0}$ to $\ket{\Omega_1}$ and vice versa via a unitary acting on $\spa{B}$. Hence Bob can measure $\ket{\Omega}_{\spa{A} \spa{B}}$ to learn $b$, collapse the state to $\ket{\Omega_b}$ and then apply the unitary mapping $\ket{\Omega_{b}}$ to $\ket{\Omega_{c}}$. He then uses the decoding procedure of the $\OT_p^n$ protocol to learn $z_{{c}}$ with probability $p$. \\ 

\noindent 
{\em (4.}$\implies${\em 1.)}.  
Let $\ket{\Omega}_{\spa{A} \spa{B}}$ be the state that Alice and Bob share before receiving $x$ and $y$ in a $\CHSH_n$ game strategy that succeeds with probability $p$. Suppose Alice measures to learn $a$ (conditioned on $x$). Let $\rho_{a,x}$ be Bob's post-measured state which occurs with probability $\pi_{a, x}$. 
We define the necessary states and probabilities by relabelling $a \to x_0$ and $x \oplus a \to x_1$. Then, Bob has no information about $x_0 \oplus x_1 = a \oplus (x \oplus a) = x$ from non-signaling, and 
the average decoding probability for $x_0$ and $x_1$ is $p$. \\ 

Since trivially $(2. \implies 3.)$, we conclude the proof of Theorem~3. 
\end{proof} 

We can also prove an equivalence between quantum encodings of $n$ pairs of bits that hide the XOR of each pair and the $n$-fold repetitions of $\CHSH$ and $\OT$, defined below. 

\begin{definition}[${n}$-fold repetition of oblivious transfer]
A quantum $n$-fold repetition of oblivious transfer protocol with correctness $p$, denoted here as $\OT_p^{\otimes n}$, with cheating probabilities $A_{\OT^{\otimes n}}$ and $B_{\OT^{\otimes n}}$, is defined analogously to an imperfect oblivious string transfer protocol except $b$ is an $n$-bit string (so $z_b$ takes values from each of Alice's strings according to $b$). We say an $\OT_p^{\otimes n}$ protocol is secure if 
Alice can gain no information about the string $b$ (without Bob aborting) and if Bob can gain no information about the string $z_0 \oplus z_1$ (without Alice aborting). 
\end{definition} 

\begin{definition}[${n}$-fold repetition of CHSH]
An $n$-fold repetition of $\CHSH$, denoted $\CHSH^{\otimes n}$, is a game between Alice and Bob where: \\ 

\begin{compactitem}
\item Alice and Bob are allowed to create and share an entangled state $\ket{\psi}$ before the game starts. Once the game starts, there is no further communication between Alice and Bob. 
\item Alice receives a random $x \in \zo^n$ and Bob receives a random $y \in \zo^n$.  
\item Alice outputs $a \in \zo^n$ and Bob outputs ${b \in \zo^n}$.  
\item Alice and Bob win if $a_i \oplus b_i = x_i \cdot y_i$, for all ${i \in \set{1, \ldots, n}}$. \\ 
\end{compactitem} 

The value of the game, $\omega^*(\CHSH^{\otimes n})$, is the maximum probability which Alice and Bob can win. 
\end{definition}  
 
\begin{theorem} \label{thm:equivalence2}
The following four statements are equivalent for every $n \in \mathbb{N}$: \\ 
\begin{compactenum}
\item There is an encoding of $x_0, x_1 \in \zo^n$ that
hides the XOR and $\frac{1}{2^n} \sum_{c \in \zo^n} \Pr[\textup{learn } x_c] = p$, where $x_c \in \zo^n$ is defined as $(x_c)_{i} := (x_{c_i})_i$.  
\item There is a secure, non-interactive $\OT_p^{\otimes n}$ protocol.  
\item There is a secure $\OT_p^{\otimes n}$ protocol.  
\item There is a strategy for winning the game $\CHSH^{\otimes n}$ with probability $p$. 
\end{compactenum}
\end{theorem}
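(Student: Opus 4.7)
The plan is to mirror the cyclic reduction structure of Theorem~\ref{thm:equivalence1}, proving $(1. \Rightarrow 2.)$, $(2. \Rightarrow 4.)$, $(3. \Rightarrow 1.)$ and $(4. \Rightarrow 1.)$ and closing the loop via the trivial $(2. \Rightarrow 3.)$. The only substantive change relative to Theorem~\ref{thm:equivalence1} is that Bob's choice (and Bob's CHSH input) is now an $n$-bit string, and the object $x_c$ for $c \in \zo^n$ is built coordinate-wise out of $x_0$ and $x_1$. The hidden-XOR condition on the encoding is still a statement about the full $n$-bit string $x_0 \oplus x_1$, which lines up exactly with security against cheating Bob in $\OT^{\otimes n}$ and with the non-signaling condition in $\CHSH^{\otimes n}$.

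For $(1. \Rightarrow 2.)$, Alice samples $a, d_1, d_2 \in \zo^n$ uniformly, sends $\rho_{x_0, x_1}$ together with $(a, d_1, d_2)$ to Bob, and sets $(z_0)_i := (1 - a_i)(x_0)_i + a_i (x_1)_i + (d_1)_i$ with the symmetric expression for $(z_1)_i$. Bob samples uniform $b \in \zo^n$, applies the encoding's decoding measurement for index $b \oplus a$, and unmasks coordinate-wise with $d_1$ or $d_2$ to produce his guess for $z_b$; averaged over $b$ this succeeds with probability $p$. The masks make $(z_0, z_1)$ uniform and yield $z_0 \oplus z_1 = x_0 \oplus x_1 \oplus d_1 \oplus d_2$, so the hidden-XOR property transfers directly into security against cheating Bob, while non-interactivity trivially hides $b$ from Alice. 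For $(2. \Rightarrow 4.)$, purify Bob's reduced states $\rho_{z_0, z_1}$ to $\ket{\psi_{z_0, z_1}}_{\spa{A}\spa{B}}$ and form
${\ket{\Omega} := \frac{1}{2^n} \sum_{z_0, z_1 \in \zo^n} \ket{z_0 \oplus z_1}_{\spa{A}_1} \ket{z_0}_{\spa{A}_2} \ket{z_1}_{\spa{A}_3} \ket{\psi_{z_0, z_1}}_{\spa{AB}}}$,
letting $\ket{\Omega_x}$ denote the post-measurement state when $\spa{A}_1$ yields $x$. Hidden XOR implies Bob's marginal of $\ket{\Omega_x}$ is independent of $x \in \zo^n$, so Uhlmann's theorem supplies unitaries $U_x$ on Alice's side with $(U_x \otimes I_{\spa{B}}) \ket{\Omega_0} = \ket{\Omega_x}$. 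In the $\CHSH^{\otimes n}$ strategy Alice and Bob share $\ket{\Omega_0}$; on input $x$, Alice applies $U_x$ and measures $\spa{A}_2$ in the computational basis to output $a$, so that Bob holds $\rho_{a, x \oplus a}$. On input $y \in \zo^n$ Bob runs the $\OT^{\otimes n}$ decoding indexed by $y$, and his output $b$ satisfies $b_i = (x_{y_i})_i = a_i \oplus x_i y_i$ at every coordinate with probability $p$.

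For $(3. \Rightarrow 1.)$, take the joint honest state $\ket{\Omega}_{\spa{A}\spa{B}}$ of any secure $\OT_p^{\otimes n}$ protocol, let Alice measure to learn $(z_0, z_1)$, and declare the post-measurement $\rho_{z_0, z_1}$ to be the encoding; security against cheating Bob gives the hidden-XOR condition. To decode a prescribed $z_c$, have Bob first measure his side to obtain the honest index $b$; security against cheating Alice means her marginal is independent of $b$, so Uhlmann gives a unitary on $\spa{B}$ mapping $\ket{\Omega_b}$ to $\ket{\Omega_c}$, after which the honest decoding outputs $z_c$. The average success over uniform $c \in \zo^n$ equals the honest correctness $p$. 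For $(4. \Rightarrow 1.)$, given a $\CHSH^{\otimes n}$ strategy with shared state $\ket{\Omega}_{\spa{A}\spa{B}}$, let Alice measure (conditional on input $x$) to obtain $a$ and take Bob's reduced state as the encoding with $x_0 := a$ and $x_1 := x \oplus a$; non-signaling forces Bob's marginal to be independent of $x_0 \oplus x_1 = x$, and the winning condition rearranges to $b_i = (x_{y_i})_i$, so Bob's conditional success on input $y = c$ equals the probability of decoding $x_c$ and averaging over uniform $c$ gives $p$.

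The main obstacle is keeping the coordinate-wise structure consistent across the four reductions: $\CHSH^{\otimes n}$ and $\OT^{\otimes n}$ each involve independent bit-wise choices on Bob's side, while the encoding bound averages over the $2^n$ string decodings indexed by $c \in \zo^n$ in a single expression. The critical check in each direction is that the $2^n$ decoding measurements used on the encoding match, in expectation, the distribution induced by Bob's uniform string-valued OT index or CHSH input, so that the per-side notions of success probability agree on the nose. Once this alignment is verified, the Uhlmann-based manipulations underlying $(2. \Rightarrow 4.)$ and $(3. \Rightarrow 1.)$ proceed essentially verbatim from the proof of Theorem~\ref{thm:equivalence1}.
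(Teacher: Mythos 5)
Your proposal is correct and takes essentially the same approach as the paper: the paper omits a separate proof of Theorem~\ref{thm:equivalence2}, presenting it as the direct analogue of Theorem~\ref{thm:equivalence1}, and your four reductions are exactly that coordinate-wise adaptation (masking with $a, d_1, d_2 \in \zo^n$, the same Uhlmann-based purification arguments, and the identity $(z_y)_i = a_i \oplus x_i y_i$ matching the $\CHSH^{\otimes n}$ winning condition). The alignment you flag and verify --- that Bob's uniform string-valued OT index $b$, his game input $y$, and the decoding index $c \in \zo^n$ all induce the same uniform average over the $2^n$ selections $x_c$ --- is precisely the point that makes the adaptation of Theorem~\ref{thm:equivalence1}'s proof go through.
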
 


\section{Applications of equivalences}  
Our equivalences provide new ways of looking at non-local games and cryptographic primitives, through the lens of quantum encodings. Apart from conceptual tools,  we can use the equivalences to prove a number of  results in all areas. 

First, using Theorem~\ref{LearningLemmata} for encodings that hide the XOR with $n=1$ and Theorem~\ref{thm:equivalence2}, we have an alternative proof of the optimality of Tsirelson's bound, $\omega^*(\CHSH) \leq \cos^2(\pi/8)$. 

Using Theorem~\ref{LearningLemmata} for encodings that hide the XOR and Theorem~\ref{thm:equivalence1}, we provide a new upper bound on the value of $\CHSH_n$, $\omega^*(\CHSH_n) \leq \half + \frac{1}{\sqrt{2^{n+1}}}$. 
It is an interesting open question to compute the exact quantum value of this game, especially since it is a simple generalization of the CHSH game for which the quantum value is not known to be implied by {information causality}. 

There is an alternative way of upper bounding the value of this game numerically using semidefinite programming (SDP)~\cite{KRT10}.  
We provide below the values for small $n$. 
We see that the SDP relaxation gives a tighter bound than ours for $n \leq 3$, but the numerical results suggest that our bound outperforms the SDP bound for larger values of $n$. \\ 

\begin{tabular}{|c||c|c|c|c|c|}
\hline
Value & $n=1$ & $n=2$ & $n=3$ & $n=4$ & $n=5$ \\ 
\hline
\hline
Lower Bound & $0.750$ & $0.625$ & $0.562$ & $0.531$ & $0.515$ \\ 
\hline 
Conjectured Value & $0.853$ & $0.750$ & $0.676$ & $0.625$ & $0.588$ \\ 
\hline 
SDP Relaxation & $0.853$ & $0.780$ & $0.743$ & $0.725$ & $0.716$ \\ 
\hline
Our Bound & $1$ & $0.853$ & $0.750$ & $0.676$ & $0.625$ \\ 
\hline
\end{tabular} \quad \\ \quad \\ 

The table above also includes our conjectured optimal value, below. 

\noindent
{\bf Conjecture 1.} 
$\forall n \in \mathbb{N}$, $\omega^*(\CHSH_n) = \half + \half \sqrt{\frac{1}{2^n}}$. 

Similarly, for secure $\OT_p^n$, we have $p \leq \half + \frac{1}{\sqrt{2^{n+1}}}$ (again, for $n=1$, we can get the optimal ${p \leq \cos^2(\pi/8)}$). 

Second, by Theorem~\ref{thm:equivalence2} and the perfect parallel repetition property of CHSH \cite{CSUU08}, i.e., the fact that if Alice and Bob play $n$ games in parallel, the probability of winning all games is exactly $(\cos^2(\pi/8))^n$, we have
for any secure $\OT_p^{\otimes n}$ protocol, ${p \leq (\cos^2(\pi/8))^n}$, which is attainable by using $n$ secure $\OT_{\cos^2(\pi/8)}$ protocols. In other words, secure oblivious transfer admits perfect parallel repetition. 

\section{Robustness of equivalences} 
Similar results can also be obtained in the case of a {\em weighted average decoding probability} defined as \\ 
$q \Pr[\textup{learning } x_0] + (1-q) \Pr[\textup{learning } x_1]$, 
for $q \in [0,1]$. When the XOR is hidden,  and $q = 1/2$, Theorem 1 shows that the above quantity is at most $\cos^2(\pi/8)$. A similar analysis shows that for any $q$, this value is at most
\begin{equation}
\half + \half \sqrt{q^2 + (1-q)^2}. \label{eqn:bound}
\end{equation} 
It is also interesting to see that such a learning relation is related to the CHSH game where Bob gets input $y=0$ with probability $q$ and input $y=1$ with probability $1-q$ while Alice still gets a uniform input. Using a similar method than in Theorem~3, we can show that this game has value at most $\half + \half \sqrt{q^2 + (1-q)^2}$. We can show the optimality of this bound using the semidefinite programming characterization of the bias of XOR games in~\cite{CSUU08}.

\section{Discussion} 
We have provided new relations between the average decoding probability of two bits (or strings) and the probability of decoding their XOR. Moreover, we have shown precise equivalences between quantum encodings, CHSH-type games,  and oblivious transfer, showing that non-locality and cryptographic primitives are often two facets of the same quantum mechanical behaviour. Last, we used our equivalences to prove new results for non-local games and oblivious transfer protocols.

As we have mentioned, it is an open question to compute the quantum value of the game $\CHSH_n$ through semidefinite programming or by proving stronger learning relations. Moreover, we would like to find an information postulate that implies that any theory that abides to it must win this game with exactly the quantum value (similar to information causality for the case of $\CHSH$). \\ 

\nocite{CKS13}
\nocite{Lo97} 
\nocite{CSUU08}
\nocite{CHSH69} 
\nocite{GPSWW06}

\vspace{-0.8cm}
\bibliography{paper}

\appendix 

\vspace{-0.5cm}

\section{Appendix} 

\section{Proof of Equation~(\ref{claim:technical}) and Theorem~\ref{LearningLemmata}} 

Recall Equation~(\ref{claim:technical}) reproduced below,  

\begin{eqnarray*}
 \cos^2(\alpha - \beta) \geq \norm{C D \ket{\psi} }_2^2 + \norm{(1-C)(1-D) \ket{\psi} }_2^2 \nonumber\\
\geq \cos^2(\alpha + \beta). 
\end{eqnarray*} 
 
We first prove the lower bound. Define the following states:  
\[ \ket{X} := \dfrac{C \ket{\psi}}{\norm{C \ket{\psi}}_2}, \quad 
\ket{X'} := \dfrac{(I-C) \ket{\psi}}{\norm{(I-C) \ket{\psi}}_2}, \]  
\[ \ket{Y} := \dfrac{D \ket{\psi}}{\norm{D \ket{\psi}}_2}, \quad 
\ket{Y'} := \dfrac{(I-D) \ket{\psi}}{\norm{(I-D) \ket{\psi}}_2}. \]
We can write $\ket{\psi}$ as \\ 
${\ket{\psi} = \cos(\alpha)\ket{X} + \sin(\alpha)\ket{X'} = \cos(\beta)\ket{Y} + \sin(\beta)\ket{Y'}}$. Since $\ket{X}$ is an eigenvector of $C$, we can write $C = \ketbra{X}{X} + \Pi_C $ and similarly we can write $I-C = \ketbra{X'}{X'} + \Pi_{C'}$, such that 
\begin{align*} 
\inner{\Pi_C}{\kb{X}} 
& = \inner{\Pi_{C'}}{\kb{X}} \\
& = \inner{\Pi_C}{\kb{X'}} \\ 
& = \inner{\Pi_{C'}}{\kb{X'}} \\
& = 0. 
\end{align*} 
We now write $\ket{Y} = \gamma_0 \ket{X} + \gamma_1 \ket{X'} + \gamma_2 \ket{Z}$, 
where $\norm{\ket{Z}}_2 = 1$, $\braket{X}{Z} = \braket{X'}{Z} = 0$, and $|\gamma_0| = \sqrt{x_0}$, $|\gamma_1| = \sqrt{x_1}$, and $|\gamma_2| = \sqrt{x_2}$ for some $x_0, x_1, x_2 \in [0,1]$. Using this expression for $\ket{Y}$, we have 
\begin{align*}
\norm{CD \ket{\psi} }_2^2 
& = \cos^2(\beta)\norm{C \ket{Y}}^2_2 \\ 
& = \cos^2(\beta) \left( x_0 + x_2 \norm{\Pi_C \ket{Z}}^2_2 \right).
\end{align*} 
Since ${\ket{\psi} \! = \! \cos(\alpha)\ket{X} \! + \! \sin(\alpha)\ket{X'} \! = \! \cos(\beta)\ket{Y} + \sin(\beta)\ket{Y'}}$, we can write $\ket{Y'} = \gamma'_0\ket{X} + \gamma'_1\ket{X'} + \gamma'_2\ket{Z}$,  
with $|\gamma'_0| = \sqrt{x'_0}$, $|\gamma'_1| = \sqrt{x'_1}$, and $|\gamma'_2| = \sqrt{x'_2}$ for some $x'_0,x'_1,x'_2 \in [0,1]$.
Using this expression for $\ket{Y'}$, we have 
\begin{align*} 
\norm{(1-C)(1-D) \ket{\psi} }_2^2 
& = \sin^2(\beta)\norm{(1-C) \ket{Y'}}^2_2 \\ 
& = \sin^2(\beta) \left( x'_1 + x'_2 \norm{\Pi_{C'} \ket{Z}}^2 \right).
\end{align*} 
Notice that  
\[ 1 = \norm{C \ket{Z}}^2_2 + \norm{(I-C)\ket{Z}}^2_2 
\comment{= \norm{\ketbra{X}{X} \ket{Z}}^2_2 + \norm{\Pi_C \ket{Z}}^2_2 + \norm{\ketbra{X'}{X'}(\ket{Z})}^2_2 +  \norm{\Pi_{C'} (\ket{Z})}^2_2} 
= \norm{\Pi_C \ket{Z}}^2_2 + \norm{\Pi_{C'} \ket{Z}}^2_2. \]
We define $A := \norm{\Pi_C \ket{Z}}^2_2 = 1 - \norm{\Pi_{C'} \ket{Z}}^2_2$. This yields 
\begin{eqnarray}
& & \norm{CD \ket{\psi} }_2^2 + \norm{(1-C)(1-D) \ket{\psi} }_2^2 \nonumber \\
& = & \cos^2(\beta) \left( x_0 + x_2 \norm{\Pi_C \ket{Z}}^2_2 \right) \nonumber \\ 
& & + \sin^2(\beta) \left( x'_1 + x'_2 \norm{\Pi_{C'} \ket{Z}}^2_2 \right) \nonumber \\
& = & \cos^2(\beta) \left( x_0 + x_2 A \right) + \sin^2(\beta) \left( x'_1 + x'_2 (1-A) \right) \nonumber \\
& = & \cos^2(\beta)x_0 + \sin^2(\beta) \left( x'_1 + x'_2 \right) \nonumber \\ 
& & + A \left( \cos^2(\beta)x_2 - \sin^2(\beta)x'_2 \right) \nonumber \\
& = & \cos^2(\beta)x_0 + \sin^2(\beta) \left( 1 - x'_0 \right) \nonumber \\ 
& & + A \left( \cos^2(\beta)x_2 - \sin^2(\beta) x'_2 \right). \label{eqn:2}
\end{eqnarray} 
Define $A(\rho, \sigma) := \arccos F(\rho, \sigma)$ to be the angle between two states $\rho$ and $\sigma$, which is a metric (see p. $413$ in \cite{NC00}).
Since $\braket{Y}{Y'} = 0$, we have 
\[ A(\ket{Y'},\ket{X}) \ge \pi/2 - A(\ket{X},\ket{Y}). \] 
This implies that 
\begin{align*} 
\sqrt{x'_0} 
& = \cos \left( \arccos |\braket{Y'}{X}| \right) \\ 
& \leq \cos \left( \pi/2 - \arccos \sqrt{x_0} \right) \\ 
& = \sin \left( \arccos \sqrt{x_0} \right) \\
& = \sqrt{1-x_0}. 
\end{align*} 
This yields $x'_0 \le 1 - x_0$.
In addition, notice that $\braket{\psi}{Z} = 0$, which implies that 
\begin{eqnarray*} 
& & \bra{Z} \left( \cos(\beta)\ket{Y} + \sin(\beta)\ket{Y'} \right)= 0 \\ 
& \iff & \cos^2(\beta)|\braket{Z}{Y}|^2 = \sin^2(\beta)|\braket{Z}{Y'}|^2 \\ 
& \iff & \cos^2(\beta)x_2 = \sin^2(\beta) x'_2.
\end{eqnarray*}
This gives us the bound,  
\begin{equation} 
\norm{CD \ket{\psi} }_2^2 + \norm{(1-C)(1-D) \ket{\psi} }_2^2 \ge x_0. 
\end{equation}
To conclude, we have 
\begin{align*} 
\arccos(\sqrt{x_0}) 
& = A(\ket{X},\ket{Y}) \\ 
& \leq A(\ket{X},\ket{\psi}) + A(\ket{\psi},\ket{Y}) \\
& \leq \alpha + \beta, 
\end{align*} 
yielding $x_0 \ge \cos^2(\alpha + \beta)$ which concludes the proof of the lower bound.  

For the upper bound, we have $x'_0 \leq 1 - x_0$ and $\cos^2(\beta)x_2 = \sin^2(\beta) x'_2$, hence,
\[ 
\norm{CD \ket{\psi} }_2^2 + \norm{(1-C)(1-D) \ket{\psi} }_2^2 
\leq 1 - x'_0, \] 
from (\ref{eqn:2}). We now show $1 - x'_0 \leq \cos^2(\beta - \alpha)$. Since $\sqrt{x'_0} = |\braket{Y'}{X}|$,  we have 
\begin{align*}
\arccos \left( \sqrt{x'_0} \right) 
& = A(\ket{Y'}, \ket{X}) \\ 
& \leq A(\ket{X}, \ket{\psi}) + A(\ket{Y'}, \ket{\psi}) \\ 
& = \pi/2 - (\beta - \alpha). 
\end{align*} 
so $\sqrt{x'_0} \geq \cos(\pi/2 - (\beta - \alpha)) = \sin(\beta - \alpha)$ 
implying $1 - x'_0 \leq \cos^2(\beta - \alpha)$, as desired. $\qed$ \\

\paragraph{Proof of Theorem 1.} 
The proof of the first statement in the theorem relies on the following decoding strategy: First, we apply the decoding procedure for learning the first bit and then we apply the second decoding procedure on the post-measurement state. The probability of decoding the XOR is the probability that both decoding procedures succeed (give correct guesses for each bit) or they both fail (give incorrect guesses for each bit). 

We prove the theorem using the following (equivalent) setting. We suppose two parties, Alice and Bob, share a joint pure state $\ket{\Omega}_{\spa{A}\spa{B}}$ such that Alice performs a projective measurement ${M = \{ M_{x_0,x_1} \}_{x_0, x_1 \in \zo}}$ on $\spa{A}$ to determine $x_0$ and $x_1$ and the post-measured state is Bob's encoding of $x_0$ and $x_1$. Let $p_i$ be the maximum probability that Bob can learn bit $x_i$, for $i \in \{ 0, 1 \}$. We note that without loss of generality, Bob can perform a projective measurement to guess the value of $x_i$ with maximum probability~\cite{NC00}. Let $P = \{ P_0, P_1 \}$ be Bob's projective measurement that allows him to guess $x_0$ with probability $p_0 = \cos^2(\alpha) \geq \half$ and $Q = \{ Q_0, Q_1 \}$ be Bob's projective measurement that allows him to guess $x_1$ with probability $p_1 = \cos^2(\beta) \geq \half$ (these measurements are on $\spa{B}$ only). Consider the following projections (on $\spa{A} \otimes \spa{B}$):
\[ C = \sum_{x_0,x_1} M_{x_0,x_1} \otimes P_{x_0} \quad \text{ and } \quad D = \sum_{x_0,x_1} M_{x_0,x_1} \otimes Q_{x_1}. \]
$C$ (resp. $D$) is the projection on the subspace where Bob guesses correctly $x_0$ (resp. $x_1$) after applying $P$ (resp. $Q$). Consider the strategy where Bob applies the two measurements $P$ and $Q$ one after the other to learn $(x_0, x_1)$, from which he can calculate $x_0 \oplus x_1$. If both guesses are correct or if both guesses are incorrect then his guess for $x_0 \oplus x_1$ is correct.

Let Bob perform the following projective measurement to learn both bits: 
\[ R = \{ R_{x_0, x_1} := Q_{x_1} P_{x_0} Q_{x_1} \}_{x_0, x_1 \in \zo}. \] 
The measurement where Bob guesses both bits correctly when applying $R$ is 
\[ E = \sum_{x_0,x_1} M_{x_0,x_1} \otimes R_{x_0, x_1} = DCD, \] 
with outcome probability $\bra{\Omega} E \ket{\Omega} = \norm{CD \ket{\Omega}}_2^2$.
The measurement where Bob guesses both bits incorrectly when applying $R$ is 
\[ F = \sum_{x_0,x_1} M_{x_0,x_1} \otimes R_{\bar{x_0}, \bar{x_1}} = (I-D)(I-C)(I-D). \] 
The probability of this measurement outcome is $\bra{\Omega} F \ket{\Omega} = \norm{(I-C)(I-D) \ket{\Omega}}_2^2$. 
With this strategy, Bob can guess $x_0 \oplus x_1$ with probability 
\[ || CD \ket{\Omega}||^2_2 + ||(I-C)(I-D) \ket{\Omega}||_2^2 \geq \cos^2(\alpha + \beta) \]
by (1). Note that 
\[ c := \frac{p_0 + p_1}{2} = \frac{\cos^2(\alpha) + \cos^2(\beta)}{2} \ge \half \] 
and for such values of $\alpha, \beta$, we have $\cos(\alpha + \beta) \geq \cos^2(\alpha) + \cos^2(\beta) - 1$. Therefore,  
\[ \Pr[\textup{Bob can learn } x_0 \oplus x_1] \geq \cos^2(\alpha + \beta) \geq (2c - 1)^2. \] 

For the second statement, ideally, we would like to extend our proof approach from bits to strings, but unfortunately this statement is not true anymore if $x_0$ and $x_1$ are strings. Instead, the analysis in~\cite{CKS13} can be generalized to strings to show 
\[ \Pr[\textup{learning } (x_0,x_1)] \geq \left( \frac{\cos^2(\alpha) + \cos^2(\beta)}{2} \right) \cos^2(\alpha + \beta). \] 
If $c \geq 1/2$, then by the same reasoning as above, we have 
$\Pr[\textup{learning } (x_0,x_1)] \geq c(2c-1)^2.$ The statement about the XOR follows directly from the above statement. $\qed$ 

\section{Proofs of the security bounds for oblivious transfer protocols}

We now provide proofs of the lower bounds of $59.9\%$ and $58.52\%$ for any oblivious transfer and oblivious string transfer protocol, respectively, with $p=1$, by relating them to bit commitment. A quantum bit commitment protocol, denoted $\BC$, is an interactive protocol with no inputs, between Alice and Bob, with two phases: \\ 

\begin{compactitem}
\item Commit phase: Bob chooses a random $b$ and interacts with Alice to commit to $b$. 
\item Reveal phase: Alice and Bob interact to reveal $b$ to Alice. 
\item If the parties are honest, Alice accepts the value of $b$. \\
\end{compactitem}

We say that the bit commitment protocol has cheating probabilities $A_{\BC}$ and $B_{\BC}$ where \\ 

\begin{compactitem}
\item $B_{\BC} \! := \! \max \! \left\{ \! \dsum_{b \in \zo} \! \frac{1}{2}  \Pr[\textup{Alice accepts outcome } b] \! \right\}$, 
\item $A_{\BC} := \Pr[\textup{Alice can learn } b \textup{ after commit phase}]$. \\ 
\end{compactitem}

We present a bit commitment protocol based on oblivious string transfer~\cite{CKS13}. \\ 

\vspace{-0.25cm}

\begin{compactenum}
\item Commit phase: Alice and Bob perform the $\OT_1^n$ protocol such that Alice gets the output ${(z_0, z_1) \in \zo^n \times \zo^n}$ and Bob gets the output $(b, w) \in \zo \times \zo^n$. Here, $b$ is the committed bit. 
\item Reveal phase: If no one aborted, then Bob sends $(b, w)$ to Alice.  
\item If $(b, w)$ from Bob is inconsistent with $(z_0, z_1)$ then Alice aborts. Otherwise, she accepts $b$ as the committed bit. \\ 
\end{compactenum}

\vspace{-0.25cm}

Let $A_{\OT^n}$ denote the probability Alice can learn $b$ in the $\OT_1^n$ protocol without Bob aborting. Clearly we have $A_{\OT^n} = A_{\BC}$. 

Let $B_{\OT^n}$ denote the probability Bob can learn ${z_0 \oplus z_1}$ in the $\OT_1^n$ protocol without Alice aborting. Notice that Bob must send $(c, z_c)$ if he wants to reveal $c$ in the BC protocol. Therefore, by letting $q$ be the probability the $\OT_p^n$ is not aborted by Alice using Bob's optimal bit commitment strategy, we have $B_{\BC} = qc$, where \\ 
$c = \dfrac{1}{2} \dsum_{b \in \zo} \Pr[\textup{Bob learns } z_b | \textup{Alice did not abort } \OT_1^n ]$. 
From Theorem~1, we know that Bob has a strategy to learn $(z_0, z_1)$ with probability, 
\vspace{-0.1cm}
\[ B_{\OT^n} \geq q c (2c-1)^2, \] 
noting that $B_{\BC} \geq 1/2 \implies c \geq 1/2$. 

We now use the lower bound for bit commitment~\cite{CK11}, which states that  there is a parameter $t \in [0,1]$ such that 
\[ B_{\BC} \geq \left( 1 - \left( 1 - \dfrac{1}{\sqrt 2} \right) t \right)^2  \quad \text{ and } \quad A_{\BC} \geq \half + \dfrac{t}{2} . \] 

The above bound yields the lower bound $\max \{ A_{\OT^n}, B_{\OT^n} \} \geq 0.5852$, which is independent of $n$. If $n=1$, we can use the stronger bound in Theorem~1 to get 
\vspace{-0.2cm}
\[ B_{\OT} \geq q (2c-1)^2, \] 
improving the lower bound to the desired value $\max \{ A_{\OT}, B_{\OT} \} \geq 0.599$. $\qed$ 

\end{document}